\newcommand\sysname{FMPC\xspace}
\newcommand\fourier{Fourier\xspace}
\newcommand\parseval{Parseval\xspace}
\newcommand\user[1]{\textit{#1}\xspace}
\newcommand\player{user\xspace}
\newcommand\players{users\xspace}
\newcommand\Players{Users\xspace}
\newcommand\node[1]{$node_{#1}$\xspace}
\newcommand\Node[1]{$Node_{#1}$\xspace}
\newcommand{\eg}{\textit{e.g.,}\@\xspace}
\newcommand{\ie}{\textit{i.e.,}\@\xspace}
\def\first{({\it i})\xspace}
\def\second{({\it ii})\xspace}
\def\fourth{({\it iv})\xspace}
\newtheorem{theorem}{Theorem}
\def\BibTeX{{\rm B\kern-.05em{\sc i\kern-.025em b}\kern-.08em
    T\kern-.1667em\lower.7ex\hbox{E}\kern-.125emX}}
\begin{document}
\title{\sysname : Secure Multiparty Computation from \fourier Series and \parseval's Identity}

\author{\IEEEauthorblockN{Alberto Sonnino}
\IEEEauthorblockA{University College London (UCL)\\
alberto.sonnino@ucl.ac.uk}}

\maketitle

\begin{abstract}
\sysname is a novel multiparty computation protocol of arithmetic circuits based on secret-sharing, capable of computing multiplication of secrets with no online communication; it thus enjoys constant online communication latency in the size of the circuit. \sysname is based on the application of \fourier series to \parseval's identity, and introduces the first generalization of \parseval's identity for \fourier series applicable to an arbitrary number of inputs. \sysname operates in a setting where \players wish to compute a function over some secret inputs by submitting the computation to a set of nodes, but is only suitable for the evaluation of low-depth arithmetic circuits. \sysname relies on an offline phase consisting of traditional preprocessing as introduced by established protocols like SPDZ, and innovates on the online phase that mainly consists of each node locally evaluating specific functions. \sysname paves the way for a new kind of multiparty computation protocols 
capable of computing multiplication of secrets as an alternative to circuit garbling and the traditional algebra introduced by Donald Beaver in 1991.
\end{abstract}

\section{Introduction} \label{sec:introduction}
Multiparty computation protocols allow multiple \players to compute some function of their combined secret inputs without revealing any additional information about their inputs other than the output of the function. \sysname is a secret-sharing based protocol for arithmetic circuits~\cite{SPDZ}; it operates in a setting where \players wish to compute a function over some secrets by submitting the computation to a set of nodes, and is only suitable for circuits with a low number of multiplications. The \players first secret-share their inputs by breaking them into multiple shares, and provide each node with one each. The nodes then perform additions and multiplications on these shares by local computations, and finally output the result of the computation. \sysname focuses on the computation of multiplication of secrets, and assumes that additions can be performed using traditional algebra as described by SPDZ~\cite{SPDZ}.

As previous secret-sharing based protocols~\cite{SPDZ,SPDZ2,mascot}, \sysname divides execution into an \emph{offline phase} and an \emph{online phase}. The offline phase is performed ahead of time and does not involve any \players secret input; the output of the computation is then evaluated during the online phase. Traditional secret-sharing based protocols are efficient to compute additions of secrets, but computing multiplication is expensive~\cite{SPDZ}; these are based on the algebra introduced by Donald Beaver~\cite{beaver1991efficient} relying on the existence of some additional secret-shared values called \emph{triples}, that are generated during the offline phase. Each node then broadcasts their shares of secrets blinded with these triples value. This causes high communication complexity during the online phase, especially for computations requiring many multiplications; their latency increases with the number of multiplications to evaluate.

\sysname is a novel secret-sharing technique to compute multiplication of secrets without requiring nodes to communicate with each other at all during the online phase; \sysname thus enjoys constant (and low)  online communication latency in the size of the circuit. This is achieved through the application of \fourier series to \parseval's identity. On the downside, \sysname cannot compose operations and is therefore only suitable to evaluate circuits with a small number of multiplications (see \Cref{sec:limitations}). \sysname relies on established preprocessing techniques for the offline phase, and makes the following contributions to the online phase:
\begin{itemize}
    \item \Cref{sec:construction} presents the mathematical construction behind \sysname by taking the example of a two-\player computation.
    \item \Cref{sec:example} provides a concrete instantiation of \sysname and shows a practical protocol execution.
    \item \Cref{sec:extension} introduces the first generalization of \parseval's identity for \fourier series applicable to an arbitrary number of inputs, and uses it to extend the two-\player computation scheme presented in \Cref{sec:construction} to a scheme supporting an arbitrary number of \players. At the best of our knowledge, this is the first secret-sharing multiparty computation protocol scaling to an arbitrary number of inputs that enables multiplication of secrets with no online communication.
\end{itemize}
\sysname is a first of its kind attempt to analytically model MPC and aims to trigger further debates towards a working system.

\section{Threat Model and Goals} \label{sec:model}
The following actors participate in a \sysname computation:
\begin{itemize}
\item \textbf{\Players:} End-user devices submit a computation over some secret inputs to a set of nodes; they wish to publish the output of a computation without revealing their secret inputs to anybody. Without loss of generality, we assume that each \player hold one secret input.
\item \textbf{Nodes:} Infrastructure executing the computation submitted by the \players.
\end{itemize}
We model the offline phase as executed by a trusted authority responsible to generate some scheme parameters and communicate them to the \players; this offline phase can be distributed using traditional techniques introduced by SPDZ~\cite{SPDZ} (see \Cref{sec:no-trusty}). 
\sysname assumes passive adversaries who follow the protocol specification but try to learn more than allowed about the \players secret inputs\footnote{We leave the extension of \sysname to active adversaries as future work; potentially adapting the MAC-based approach introduced by SPDZ~\cite{SPDZ}.}. Nodes can collude with each other as long as there is at least one honest non-colluding node. 
Under the above threat model, \sysname achieves the following design goals:
\begin{itemize}
\item \textbf{Private Computation} - Parties only learn the output of the computation.
\item \textbf{Non-Interactivity} - Nodes do not communicate with each other during the online phase to perform computations.
\end{itemize}

\section{Background} \label{sec:background}
We recall the theory of \fourier series and \parseval's identity, and the expression of some useful convergent sums analytically; Appendix~\ref{sec:fields} shows how to compute them numerically using finite fields.

\subsection{Convolution of \fourier Series} \label{sec:fourier}
We recall the \fourier series of the convolution between two functions $f(x)$ and $g(x)$ periodic on ($-l,\ l$). Assuming that $f(x)$ and $g(x)$ $\in{\mathbb L}^2[-l,l]$ (\ie $f(x)$ and $g(x)$ are square-integrable in the interval [$-l,\ l$]), their respective \fourier series representations read:
\begin{align}\label{eq:conv1}
&  f(x)=\frac{a_0}{2}+\sum_{n=1}^{+\infty}a_n\cos\Bigl(\frac{n\pi x}{l}\Bigr)+\sum_{n=1}^{+\infty}b_n\sin\Bigl(\frac{n\pi x}{l}\Bigr)\\
&  g(x)=\frac{\alpha_0}{2}+\sum_{n=1}^{+\infty}\alpha_n\cos\Bigl(\frac{n\pi x}{l}\Bigr)+\sum_{n=1}^{+\infty}\beta_n\sin\Bigl(\frac{n\pi x}{l}\Bigr)\nonumber
\end{align}
where the Fourier coefficients $(a_0,\ a_n,\ b_n)$ and $(\alpha_0,\ \alpha_n,\ \beta_n)$ (for $n=1,2,\cdots)$ are given below:
\begin{align}\label{eq:conv2}
a_0&=\frac{1}{l}\int_{-l}^{l}f(x)dx \; ; \quad  a_n=\frac{1}{l}\int_{-l}^{l}f(x)\cos\Bigl(\frac{n\pi x}{l}\Bigr)dx \\ \nonumber
b_n&=\frac{1}{l}\int_{-l}^{l}f(x)\sin\Bigl(\frac{n\pi x}{l}\Bigr)dx\\ \nonumber
\alpha_0&=\frac{1}{l}\int_{-l}^{l}g(x)dx \; ; \quad \alpha_n=\frac{1}{l}\int_{-l}^{l}g(x)\cos\Bigl(\frac{n\pi x}{l}\Bigr)dx \\
\beta_n&=\frac{1}{l}\int_{-l}^{l}g(x)\sin\Bigl(\frac{n\pi x}{l}\Bigr)dx\nonumber
\end{align}
The convolution function between $f(x)$ and $g(x)$ is defined as
\begin{equation}\label{eq:conv3}
\Phi(x)=\Bigl(f\star g\Bigr)(x)\equiv\frac{1}{l}\int_{-l}^lf(y)g(x-y)dy
\end{equation}
By inserting \Cref{eq:conv1} into \Cref{eq:conv3}, and by taking into account the following identities
\begin{align}\label{eq:conv4}
&\frac{1}{l}\int_{-l}^l\sin\bigl(\frac{n'\pi}{l}x\bigr)\sin\bigl(\frac{n\pi}{l}x\bigr)dx=\delta_{nn'}\\
&\frac{1}{l}\int_{-l}^l\cos\bigl(\frac{n'\pi}{l}x\bigr)\cos\bigl(\frac{n\pi}{l}x\bigr)dx=\delta_{nn'}\nonumber\\
&\int_{-l}^l\sin\bigl(\frac{n'\pi}{l}x\bigr)\cos\bigl(\frac{n\pi}{l}x\bigr)dx=0\nonumber
\end{align}
where $\delta_{nn'}$ denotes Kronecker's delta, we obtain the \fourier series of the convolution between two functions:
\begin{align}\label{eq:conv5}
&\Phi(x)=\frac{c_0}{2}+\sum_{n=1}^\infty c_n\cos\bigl(\frac{n\pi}{l}x\bigr)+\sum_{n=1}^\infty d_n\sin\bigl(\frac{n\pi}{l}x\bigr)
\end{align}
where
\begin{align}
&c_0=\frac{a_0\alpha_0}{2}\quad ;\quad c_n=a_n\alpha_n-a_n\beta_n\quad ; \quad d_n=b_n\alpha_n+b_n\beta_n\nonumber
\end{align}
We also recall that the convolution operation satisfies commutativity and associativity; these properties are used in \Cref{sec:extension} to scale \sysname to an arbitrary number of inputs.

\subsection{\parseval's Identity}\label{sec:parseval}
Let's assume two functions $f(x)$ and $g(x)$ $\in{\mathbb L}^2[-l,l]$ as defined in \Cref{eq:conv1}; defining the four vectors ${\bf A}$, ${\bf B}$, $\boldsymbol\alpha$ and $\boldsymbol\beta$ (for $n=1,2,\cdots)$ as below,
\begin{align}\label{eq:par1}
{\bf A}&=\{a_0/{\sqrt 2},\ a_n\} \ \  ; \ \  {\bf B}=\{b_n\} \\ \nonumber
{\boldsymbol\alpha}&=\{\alpha_0/{\sqrt 2},\ \alpha_n\}\ \ ;\ \ {\boldsymbol\beta=\{\beta_n\}} 
\end{align}
\parseval's identity~\cite{gradshteyn2014table} holds for $f(x)$ and $g(x)$:
\begin{align}\label{eq:par2}
&{\bf A}\cdot{\boldsymbol\alpha}+{\bf B}\cdot{\boldsymbol\beta}=\frac{1}{l}\int_{-l}^l f(x)g(x)dx
\quad \text{or}\\
&\Big(\frac{a_0\alpha_0}{2}+\sum_{n=1}^\infty a_n\alpha_n\Big)+\Big(\sum_{n=1}^\infty b_n\beta_n\Big)=\frac{1}{l}\int_{-l}^l f(x)g(x)dx\nonumber
\end{align} 
\parseval's identity only applies to two functions; \Cref{sec:generalized-parseval} presents our generalization of \parseval's identity that applies to an arbitrary number of functions used to extend \sysname to an arbitrary number of inputs.

\subsection{Convergent Sums}\label{sec:convergence}
\sysname requires the computation of scalar products of vectors with infinite components. It is therefore crucial that the infinite series produced by these scalar products are convergent, and that the results of these series can be computed efficiently and exactly (\ie analytically). For example, in case of two \players, \sysname requires the evaluation of the following convergent sums (see \Cref{sec:example}):
\begin{align}\label{eq:sum1}
& \sum_{n=1}^\infty\frac{1}{(\gamma^2-n^2)(\delta^2-n^2)} \quad ; \quad \sum_{n=1}^\infty\frac{n^2}{(\gamma^2-n^2)(\delta^2-n^2)}
\end{align}
These expressions can be easily calculated from the following well-known identity~\cite{gradshteyn2014table}
\begin{equation}\label{eq:sum2}
\Omega(\gamma)=\sum_{n=1}^\infty\frac{1}{(\gamma^2-n^2)} = \frac{\pi}{2\gamma}\cot(\pi\gamma)-\frac{1}{2\gamma^2}
\end{equation}
as below:
\begin{align}\label{eq:sum3}
&\sum_{n=1}^\infty\frac{1}{(\gamma^2-n^2)(\delta^2-n^2)} = \frac{1}{\delta^2-\gamma^2}\Bigr(\Omega(\gamma)-\Omega(\delta)\Bigr)\nonumber\\
&\qquad=\frac{\pi}{2(\delta^2-\gamma^2)}\Bigl(\frac{\cot(\pi\gamma)}{\gamma}-\frac{\cot(\pi\delta)}{\delta}\Bigr)-\frac{1}{2\gamma^2\delta^2}\nonumber \\
&\sum_{n=1}^\infty\frac{n^2}{(\gamma^2-n^2)(\delta^2-n^2)}=\frac{1}{\delta^2-\gamma^2}\Bigr(\gamma^2\Omega(\gamma)-\delta^2\Omega(\delta)\Bigr)\nonumber\\
&\qquad=\frac{\pi}{2(\delta^2-\gamma^2)}\Bigr(\gamma\cot(\pi\gamma)-\delta\cot(\pi\delta)\Bigr)\nonumber
\end{align}
\Cref{sec:example} illustrates that a convenient choice of the mask functions allows evaluating the infinite series (\ie the scalar products) analytically.

\section{Two-\players \sysname Construction} \label{sec:construction}
We present the mathematical constructions behind \sysname by illustrating a two-\players computation protocol; \Cref{sec:example} provides a concrete instantiation of this construction.

\subsection{Mathematical Construction} \label{sec:simple-construction}
\Cref{fig:simple-fmpc} presents a two-\players \sysname computation. We consider two \players, \user{Alice} holding a secret input $a$ and \user{Bob} holding a secret input $b$, wishing to compute the product $ab$ without revealing their secret inputs. The protocol operates on the public parameters $l$ and $q$ (with $0 < q < 1$); and on the two parametric functions $\phi_{\tau'}, \psi_{\sigma'} \in{\mathbb L}^2[-l,l]$ whose parameters are generated by the trusted authority \user{Trusty}; we refer to those functions as \emph{mask functions}. The protocol is divided in two phases: an \emph{offline phase} consisting of pre-computations that can be performed ahead of time as it is independent on the secret inputs, and an \emph{online phase} producing the output $ab$.

\paragraph{Offline phase} We model the offline phase as executed by a trusted authority \user{Trusty} (\Cref{sec:no-trusty} shows how to distribute the offline phase). \user{Trusty} generates at random $\tau$ and $\sigma$, and computes the \textit{normalization coefficient} $\eta$ given by
\begin{equation}\label{ab1}
\eta^{-1}(\tau,\sigma,l)=\frac{1}{l}\int_{-l}^l\phi_\tau(x)\psi_\sigma(x)dx
\end{equation}
and computes the following \emph{normalized mask-functions}:
\begin{equation}\label{ab2}
\widetilde{\phi}_\lambda(x)=\eta^q\phi_\tau(x)\ \; \ \ \widetilde{\psi}_\lambda(x)=\eta^{1-q}\psi_\sigma(x)\nonumber
\end{equation} 
where $\mathbb\lambda$ indicates the set of parameters $\lambda=(\tau,\sigma,l,q)$~(\ding{202}). Contrarily to traditional secret-sharing protocols like SPDZ~\cite{SPDZ}, \sysname pushes the complexity at the edges by offloading the offline phase to the users.

\paragraph{Online phase} \user{Trusty} sends ${\tilde{\phi}}_{\lambda}$ to \user{Alice} and ${\tilde{\psi}}_{\lambda}$ to \user{Bob}, who respectively compute $f$ and $g$:
\begin{align}\label{ab3}
&f(x)=a\widetilde{\phi}_\lambda(x) \quad ; \quad  g(x)=b\widetilde{\psi}_\lambda(x) 
\end{align} 
\user{Alice} computes the  vectors $\bf A$ and $\bf B$ from $f(x)$, and \user{Bob} computes the  vectors ${\boldsymbol\alpha}$ and ${\boldsymbol\beta}$ from $g(x)$ as defined by \Cref{eq:conv2,eq:par1}~(\ding{203}). \user{Alice} sends $\bf A$ to \node{1} and $\bf B$ to \node{2}; and \user{Bob} sends $\boldsymbol\alpha$ to \node{1} and $\boldsymbol\beta$ to \node{2}. As a result, \node{1} gathers the constant and cosine component of the \parseval's identity, and \node{2} gathers the sine component of the \parseval's identity~(\ding{203}). \Node{1} outputs $(\boldsymbol{A\cdot\alpha})$, and \node{2} outputs $(\boldsymbol{B\cdot \beta})$~(\ding{205}); anyone can compute $(\boldsymbol{A\cdot\alpha})+(\boldsymbol{B\cdot\beta})=ab$ according to \Cref{eq:par2}. The intuition behind the scheme is to decompose the product $ab$ into two components that are eventually added together to compute the final result; this reduces the problem of multiplication of secret to an addition, which is enabled by \parseval's identity. \Cref{sec:example} presents an end-to-end example calculation, with practical choices of mask functions. 

\begin{figure}[t]
\centering
\includegraphics[width=\linewidth,keepaspectratio]{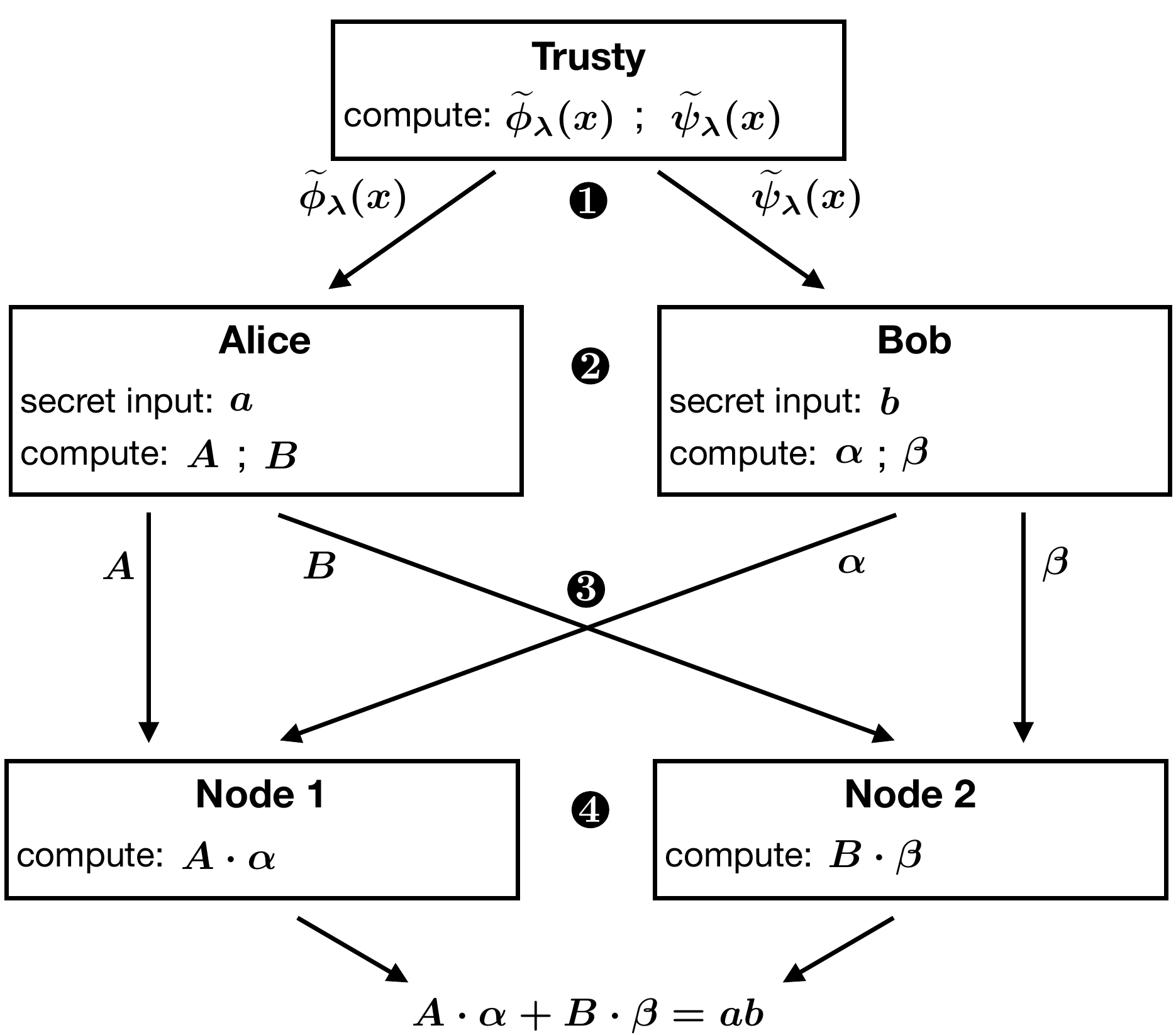}
\caption{\footnotesize Overview of \sysname execution. \user{Trusty} sends $\widetilde{\phi}_\lambda(x)$ to \user{Alice} and $\widetilde{\psi}_\lambda(x)$ to \user{Bob}~(\ding{202}). \user{Alice} computes $\bf A$ and $\bf B$, and \user{Bob} computes $\boldsymbol\alpha$ and $\boldsymbol\beta$ according to \Cref{eq:par1}~(\ding{203}); \user{Alice} sends $\bf A$ to \node{1} and $\bf B$ to \node{2}, and \user{Bob} sends $\boldsymbol\alpha$ to \node{1} and $\boldsymbol\beta$ to \node{2}~(\ding{204}). \Node{1} outputs $(\boldsymbol{A\cdot\alpha})$ and \node{2} outputs $(\boldsymbol{B\cdot\beta})$~(\ding{205}); anyone can compute $\boldsymbol{A\cdot\alpha}+\boldsymbol{B\cdot\beta}=ab$ according to \Cref{eq:par2}.}
\label{fig:simple-fmpc}
\end{figure}
%

\subsection{Decentralization of the Offline Phase} \label{sec:no-trusty}
We do not innovate on the offline phase, and rely on existing established solutions. The offline phase of \sysname randomly generates the parameters of the mask functions and computes the normalization coefficient.
\sysname may employ the same technique used by SPDZ~\cite{SPDZ} to generate multiplicative triples, which relies on somewhat homomorphic encryption; despite the simplicity of this approach, it incurs expensive public key cryptography and may lead to high cost. Mascot~\cite{mascot} overcomes this limitation by using oblivious transfer to generate the triples values during the offline phase. \Cref{sec:discussion} shows how to use the offline phase of those protocols to instantiate a practical \sysname computation.
Alternatively, \sysname  may rely on a semi-trusted authority to run the offline phase; the authority is then trusted to correctly generate those parameters and to not collude with the nodes, but never learns any information about the \players inputs.
\section{Instantiation of Two-\players \sysname Computation} \label{sec:example}
We illustrate a practical example of \sysname computation considering the following mask-functions:
\begin{align}\label{ex1}
&\phi_{\tau}(x) = \tau_1 \sin (\tau_2 x)+ \tau_3 \cos (\tau_4 x)\\
& \psi_{\sigma}(x) = \sigma_1 \sin (\sigma_2 x)+ \sigma_3 \cos (\sigma_4 x)\nonumber
\end{align} 
for parameters $\tau = (\tau_1, \tau_2, \tau_3, \tau_4)$ and $\sigma = (\sigma_1, \sigma_2, \sigma_3, \sigma_4)$. For simplicity, we set $ q=1/2$, and
\begin{align}\label{ex2}
&\tau_2=\sigma_4=\frac{\pi}{4l} \quad ; \quad 
\tau_4=\sigma_2=\frac{3\pi}{4l}
\end{align}
to obtain the following normalized mask-functions:
\begin{align}\label{ex3}
&{\widetilde\phi}_{\tau}(x) = \eta^{1/2}\Bigl(\tau_1 \sin \left(\frac{\pi}{4l} x\right)+ \tau_3 \cos\left(\frac{3\pi}{4l} x\right)\Bigr)\\
& {\widetilde\psi}_{\sigma}(x) = \eta^{1/2}\Bigl(\sigma_1 \sin\left(\frac{3\pi}{4l} x\right)+ \sigma_3 \cos\left(\frac{\pi}{4l} x\right)\Bigr)\nonumber\\
&{\rm with}\quad \eta=\frac{\pi}{2 (\tau_1\sigma_1+\tau_3\sigma_3)}\nonumber
\end{align} 

\subsection{Protocol Execution} \label{sec:protocol-execution}
We show how the protocol illustrated in \Cref{fig:simple-fmpc} executes using the mask functions given by \Cref{ex1}.

\paragraph{Offline phase} \Cref{alg:ex-offline} illustrates the offline phase; \user{Trusty} generates at random $(\tau_1, \tau_3, \sigma_1, \sigma_3)$; computes the normalization coefficients $\eta$; and sends $(\tau_1,\tau_3,\eta)$ to \user{Alice} and $(\sigma_1,\sigma_3,\eta)$ to \user{Bob}~(\ding{202}).

\paragraph{Online phase} \Cref{alg:ex-online} illustrates the online phase; \user{Alice} computes $(a_0,\widehat{a}_0)$, and \user{Bob} computes $(\alpha_0,\widehat{\alpha}_0)$~(\ding{203}). \user{Alice} sends $a_0$ to \node{1} and $\widehat{a}_0$ to \node{2}; \user{Bob} sends $\alpha_0$ to \node{1} and $\widehat{\alpha}_0$ to \node{2}~(\ding{204}). \Node{1} has all information it needs to compute $a_n$ and $\alpha_n$ (see \Cref{ex4} of \Cref{sec:analytic-computation})---those can be computed from the mere knowledge of $a_0$ and $\alpha_0$---and outputs $s_1=a_0\alpha_0/2+\sum_{n=1}^\infty a_n\alpha_n$. In practice, \node{1} only evaluates and outputs $s_1=(3\pi/16)a_0\alpha_0$ (see \Cref{ex6} of \Cref{sec:analytic-computation}). 
Similarly, \node{2} has all information to compute $b_n$ and $\beta_n$ (those can be computed from $\widehat{a}_0$ and $\widehat{\alpha}_0$), and outputs $s_2=\sum_{n=1}^\infty b_n\beta_n$; in practice, \node{2} simply outputs $s_2=(3\pi/16)\widehat{a}_0\ \widehat{\alpha}_0$ (see \Cref{ex7} of \Cref{sec:analytic-computation}). Anyone can compute $s_1+s_2=ab$, which follows from \Cref{eq:par2}~(\ding{205}).

All operations are performed over a finite field $\mathbb{F}^N_{p^m}$ where $p$ is prime, $m$ is integer and $N=p^m-1$; addition, multiplication, and the modular inverse are implemented by modular arithmetic $mod\ N$, that is $0\leq n\leq N-1$.

\begin{algorithm}[t]
\begin{algorithmic}[1]
\Procedure{Trusty}{}
\State $\tau_1, \tau_3, \sigma_1, \sigma_3 \gets random \text{ from } \mathbb{F}_{p^m}$
\State compute $\eta = \frac{\pi}{2(\tau_1\sigma_1+\tau_3\sigma_3)} \text{ mod } N$
\State send $(\tau_1,\tau_3,\eta)$ to \user{Alice}
\State send $(\sigma_1,\sigma_3,\eta)$ to \user{Bob}
\EndProcedure
\end{algorithmic}
\caption{\sysname example computation -- Offline phase}
\label{alg:ex-offline}
\end{algorithm}

\begin{algorithm}[t]
\begin{algorithmic}[1]
\Procedure{Alice}{$\tau_1,\tau_3,\eta$}
\State compute $a_{0} = \frac{4\sqrt{2}}{3\pi} a \tau_3 \eta^{1/2}\text{ mod } N$
\State compute $\widehat{a_{0}} = \frac{4\sqrt{2}}{3\pi} a \tau_1 \eta^{1/2}\text{ mod } N$
\State send $a_{0}$ to \node{1} and $\widehat{a_{0}}$ to \node{2}
\EndProcedure
\Procedure{Bob}{$\sigma_1,\sigma_3,\eta$}
\State compute $\alpha_{0} = \frac{4\sqrt{2}}{\pi} b \sigma_3 \eta^{1/2}\text{ mod } N$
\State compute $\widehat{\alpha}_0 = \frac{4\sqrt{2}}{\pi} b \sigma_1 \eta^{1/2}\text{ mod } N$
\State send $\alpha_{0}$ to \node{1} and $\widehat{\alpha}_0$ to \node{2}
\EndProcedure
\Procedure{Node1}{$a_{0}, \alpha_{0}$}
\State output $s_1 = (3\pi/16)a_0\alpha_0\text{ mod } N$
\EndProcedure
\Procedure{Node2}{$\widehat{a_{0}}, \widehat{\alpha}_0$}
\State output $s_2 = (3\pi/16)\widehat{a}_0\widehat{\alpha}_0\text{ mod } N$
\EndProcedure
\end{algorithmic}
\caption{\sysname example computation -- Online phase}
\label{alg:ex-online}
\end{algorithm}

%

\subsection{Correctness of the Computation} \label{sec:analytic-computation}
We compute the normalization coefficients and the normalized mask functions according to \Cref{ab1} and (\ref{ab2}); and the functions $f(x)$ and $g(x)$ according to \Cref{ab3}. All computations are performed using Wolfram Mathematica\footnote{http://www.wolfram.com/mathematica/} 11.2, we release our script as open source\footnote{
https://gist.github.com/asonnino/7d3abd570736d13bddf61fa429692983
}. The \fourier coefficients $(a_0,\ a_n,\ b_n)$ and $(\alpha_0,\ \alpha_n,\ \beta_n)$ are then given below (for $n=1,2,\cdots)$:
\begin{align}\label{ex4}
&a_0 = \frac{4\sqrt{2}}{3\pi}a\tau_3\eta^{1/2} \quad ; \quad
a_n = \frac{9(-1)^n}{9-16n^2}a_0\\ \nonumber
&b_n = \frac{12(-1)^n n}{1-16n^2}\widehat{a}_0\\ \nonumber
&\alpha_0 = \frac{4\sqrt{2}}{\pi}b\sigma_3\eta^{1/2} \quad ; \quad
\alpha_n = \frac{(-1)^n}{1-16n^2}\alpha_0 \\ \nonumber
&\beta_n = \frac{4(-1)^n n}{9-16n^2}\widehat{\alpha}_0 \nonumber
\end{align} 
where:
\begin{align}\label{ex5}
&\widehat{a}_0=\frac{4\sqrt{2}}{3\pi}a\tau_1\eta^{1/2} \quad ; \quad \widehat{\alpha}_0=\frac{4\sqrt{2}}{\pi}b\sigma_1\eta^{1/2} \\\nonumber
&\eta=\frac{\pi}{2(\tau_1\sigma_1+\tau_3\sigma_3)}
\end{align}
We can easily check \parseval's identity; \node{1} computes
\begin{align}\label{ex6}
& \frac{a_0\alpha_0}{2}+\sum_{n=1}^\infty a_n\alpha_n = \\ \nonumber
& a_0\alpha_0\Big(\frac{1}{2}+\frac{9}{256}\sum_{n=1}^\infty\frac{1}{[(3/4)^2-n^2][(1/4)^2-n^2]}\Big)=\frac{3\pi}{16}a_0\alpha_0
\end{align} 
and \node{2} computes
\begin{align}\label{ex7}
& \sum_{n=1}^\infty b_n\beta_n = \\ \nonumber & \frac{3}{16}\widehat{a}_0\widehat{\alpha}_0\sum_{n=1}^\infty\frac{n^2}{[(3/4)^2-n^2][(1/4)^2-n^2]}= \frac{3\pi}{16}\widehat{a}_0\widehat{\alpha}_0
\end{align} 
\Cref{ex6} and \Cref{ex7} are computed by evaluating the convergent sums given by \Cref{eq:sum1} of \Cref{sec:background}. By adding \Cref{ex6} to \Cref{ex7}, we finally get:
\begin{equation}\label{ex8}
\frac{a_0\alpha_0}{2}+\sum_{n=1}^\infty a_n\alpha_n+\sum_{n=1}^\infty b_n\beta_n= \frac{3\pi}{16}(a_0\alpha_0+\widehat{a}_0\widehat{\alpha}_0)=ab
\end{equation}

\subsection{Security Analysis} \label{sec:security-analysis}
We show that no  adversary can retrieve the secret inputs $a$ and $b$ from the knowledge of $(a_0, \alpha_0, \widehat{a}_0, \widehat{\alpha}_0, \eta)$. We assume passive adversaries; \ie they follow the protocol specification but try to learn more than allowed (see \Cref{sec:model}). Informally, the adversary possesses five equations, \ie the expressions of $(a_0, \alpha_0, \widehat{a}_0, \widehat{\alpha}_0, \eta)$, and six unknown, \ie $(a,b,\tau_1,\tau_3,\sigma_1,\sigma_3)$. The adversary thus holds fewer equations than unknowns, which make it information-theoretically impossible to recover any unknown value. \Cref{th:node-secrecy} presents this result more formally.

\begin{theorem}\label{th:node-secrecy}
The scheme presented in \Cref{sec:protocol-execution} achieves perfect secrecy against a passive adversary holding $\theta=(a_0, \alpha_0, \widehat{a}_0, \widehat{\alpha}_0, \eta)\in\mathbb{F}_{p^m}$; \ie for all distribution of $a,b\in\mathbb{F}_{p^m}$ and for all $\theta=(a_0, \alpha_0, \widehat{a}_0, \widehat{\alpha}_0, \eta)\in\mathbb{F}_{p^m}$, we have $Pr[a|\theta]=Pr[a]$ and  $Pr[b|\theta]=Pr[b]$.
\end{theorem}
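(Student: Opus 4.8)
The plan is to show that the map from the secret/parameter space to the observable space $\theta=(a_0,\alpha_0,\widehat a_0,\widehat\alpha_0,\eta)$ has, for every fixed $\theta$ in the support and every fixed value $a=a^*$, a solution in the remaining unknowns; and more strongly, that the fibre over $\theta$ has the same size regardless of the value we assign to $a$. Since $a$ is then uniform (or arbitrarily distributed) conditioned on $\theta$ exactly when it is unconditionally, we get $\Pr[a\mid\theta]=\Pr[a]$, and symmetrically for $b$. The six unknowns are $(a,b,\tau_1,\tau_3,\sigma_1,\sigma_3)$ and the five defining equations are, from \Cref{ex4,ex5},
\begin{align*}
\eta&=\tfrac{\pi}{2(\tau_1\sigma_1+\tau_3\sigma_3)},\quad
a_0=\tfrac{4\sqrt2}{3\pi}\,a\tau_3\eta^{1/2},\quad
\widehat a_0=\tfrac{4\sqrt2}{3\pi}\,a\tau_1\eta^{1/2},\\
\alpha_0&=\tfrac{4\sqrt2}{\pi}\,b\sigma_3\eta^{1/2},\quad
\widehat\alpha_0=\tfrac{4\sqrt2}{\pi}\,b\sigma_1\eta^{1/2},
\end{align*}
all read over $\mathbb F_{p^m}$ (with $\sqrt2$, $\pi$, $\eta^{1/2}$ interpreted as the fixed field elements the protocol uses, and assumed invertible).

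First I would fix $\theta$ and an arbitrary target $a=a^*\neq 0$. From the second and third equations, $\tau_3$ and $\tau_1$ are each determined as $a^*$-dependent multiples of $\eta^{-1/2}$: $\tau_3=\tfrac{3\pi}{4\sqrt2}\,a_0 (a^*)^{-1}\eta^{-1/2}$ and $\tau_1=\tfrac{3\pi}{4\sqrt2}\,\widehat a_0 (a^*)^{-1}\eta^{-1/2}$. Next I would show $b$ is forced: dividing the product $\tau_1\sigma_1+\tau_3\sigma_3=\tfrac{\pi}{2\eta}$ through and substituting $\sigma_1=\tfrac{\pi}{4\sqrt2}\widehat\alpha_0 b^{-1}\eta^{-1/2}$, $\sigma_3=\tfrac{\pi}{4\sqrt2}\alpha_0 b^{-1}\eta^{-1/2}$ (from the last two equations), the first equation becomes $\tfrac{\pi}{4\sqrt2\,\eta^{1/2}b}\big(\tau_1\widehat\alpha_0+\tau_3\alpha_0\big)=\tfrac{\pi}{2\eta}$, i.e. $b$ is a nonzero rational expression in $\theta$ and $a^*$ — in fact one checks $b = \tfrac{a^*}{?}\cdots$ collapses to a value independent of $a^*$ because $\tau_1,\tau_3$ carry a compensating factor $(a^*)^{-1}$. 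So for each choice of $a^*$ there is exactly one consistent tuple $(b,\tau_1,\tau_3,\sigma_1,\sigma_3)$, hence the fibre over $\theta$ is a graph over the $a$-axis: its cardinality is exactly $|\mathbb F_{p^m}|$ minus the degenerate values, independent of which $a^*$ we picked. This gives $\Pr[a=a^*\mid\theta]=\Pr[a=a^*]$, and the same argument with the roles of $a$ and $b$ (and of the $\phi$- and $\psi$-parameters) swapped gives the statement for $b$.

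The main obstacle is \textbf{handling the degenerate cases and the square-root/constant ambiguity}, not the generic count. I must pin down exactly which $\theta$ lie in the support: the equations force $a_0,\widehat a_0$ proportional (ratio $\tau_3/\tau_1$) and $\alpha_0,\widehat\alpha_0$ proportional (ratio $\sigma_3/\sigma_1$), and $\eta$ is constrained by $\tau_1\sigma_1+\tau_3\sigma_3=\tfrac{\pi}{2\eta}$, so $\theta$ does \emph{not} range over all of $\mathbb F_{p^m}^5$ — the theorem's universal quantifier over $\theta$ should be read as "for all $\theta$ in the support", which I would state explicitly. I also need $\eta^{1/2}$ to be a well-defined nonzero field element (true when the relevant element is a nonzero quadratic residue, which the offline phase guarantees by construction), and I must treat $a=0$ or $b=0$ separately, where some coefficients vanish and the fibre degenerates; there one argues directly that $a=0$ forces $a_0=\widehat a_0=0$ and conversely, so conditioning still changes nothing. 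Once the support is correctly delimited and these boundary strata are dispatched, the core computation is the short substitution above and the observation that the $(a^*)^{-1}$ factors cancel, establishing the uniform fibre size and hence perfect secrecy.
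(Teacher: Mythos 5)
Your approach is genuinely different from the paper's: you count preimages of the observation map and argue that the fibre over each admissible $\theta$ contains exactly one point per value of $a^*$, whereas the paper runs a Bayesian argument, rewriting $Pr[\theta|a]$ as the probability of an event $\widehat{\theta}$ (\Cref{eq:th3b}) from which the factor $a$ has been stripped and which therefore does not mention $a$. Your route is more explicit and correctly forces you to confront the support of $\theta$ and the degenerate strata ($a=0$, non-residues, etc.), which the paper glosses over. However, there are two problems. First, a concrete error: $b$ does \emph{not} collapse to a value independent of $a^*$. Carrying out your own substitution, $\tau_1\widehat{\alpha}_0+\tau_3\alpha_0$ is proportional to $(a^*)^{-1}$, so the constraint yields $b=\frac{3\pi}{16}\bigl(a_0\alpha_0+\widehat{a}_0\widehat{\alpha}_0\bigr)(a^*)^{-1}$, i.e.\ $a^*b$ is pinned to the protocol output $ab$ of \Cref{ex8} and $b$ varies inversely with $a^*$. (Had $b$ really been independent of $a^*$, the transcript would determine $b$ outright and secrecy for $b$ would fail.) The one-point-per-$a^*$ fibre structure survives this correction, so your count is still right.

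Second, and more seriously, equal fibre cardinality does not by itself give $Pr[a|\theta]=Pr[a]$ ``for all distributions of $a,b$'' as the theorem demands. Since the unique preimage over $(\theta,a^*)$ carries $b=\frac{3\pi}{16}(a_0\alpha_0+\widehat{a}_0\widehat{\alpha}_0)(a^*)^{-1}$, the likelihood is
$Pr[\theta\mid a=a^*]\propto Pr\bigl[b=\tfrac{3\pi}{16}(a_0\alpha_0+\widehat{a}_0\widehat{\alpha}_0)(a^*)^{-1}\bigr]$,
which depends on $a^*$ unless $b$ is uniform (or unless one conditions appropriately). In other words, the adversary can always reconstruct the output $ab$ from $\theta$, and under a non-uniform prior on $b$ this reweights the posterior on $a$; your sentence ``uniform (or arbitrarily distributed)'' skates over exactly this. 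To close the gap you must either restrict to uniform priors on the inputs, or reformulate the claim as secrecy beyond the intended output $ab$. Note that the paper's own proof has the mirror-image weakness --- the step $Pr[\theta|a]=Pr[\widehat{\theta}]$ in \Cref{eq:th3} silently assumes that rescaling $(\tau_1,\tau_3)$ by $a$ is measure-preserving jointly with $\eta$ --- so you should not take the published argument as the standard to match; but your write-up needs the $b$-dependence fixed and the prior assumption made explicit before it is a proof.
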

\begin{proof}
Let us first consider the input $a$. For any $a,\theta\in\mathbb{F}_{p^m}$ we introduce the conditional probability $Pr[a|\theta]$ in terms of the joint probability
\begin{align} \label{eq:th1}
& Pr[a|\theta] = \frac{Pr[a\wedge\theta]}{Pr[\theta]} = \frac{Pr[\theta|a]Pr[a]}{Pr[\theta]}
\end{align}
where the second equality is guaranteed by Bayes' theorem. It is useful to compute $Pr[\theta]$ using the law of total probability. Conditioning over all $a\in\mathbb{F}_{p^m}$ gives
\begin{align}\label{eq:th2}
& Pr[\theta] = \sum_{a\in\mathbb{F}_{p^m}} Pr[\theta|a]Pr[a]
\end{align}
Removing all constant and known factors from the expression of $\theta$, we get
\begin{align}\label{eq:th3}
Pr[\theta|a] &= Pr[\widehat{\theta}]
\end{align}
where,
\begin{align}\label{eq:th3b}
&\widehat{\theta}=(\tau_1\eta^{1/2}\wedge\tau_3\eta^{1/2}\wedge b\sigma_1\eta^{1/2}\wedge b\sigma_3\eta^{1/2}\wedge\eta)
\end{align}
which is independent of $a$. Hence,
\begin{align}\label{eq:th4}
& Pr[\theta] = Pr[\theta|a] \sum_{a\in\mathbb{F}_{p^m}} Pr[a] = Pr[\theta|a]=Pr[\widehat{\theta}] 
\end{align}
Plugging \Cref{eq:th3} and \Cref{eq:th4} into \Cref{eq:th1}, we get $Pr[a|\theta]=Pr[a]$. The same reasoning applies to the input $b$.
\end{proof}

This implies that nodes are not able to recover the \players inputs even if they collude (but multiple nodes are still required to handle additions of secrets, as in SPDZ~\cite{SPDZ}). 


\subsection{Discussion} \label{sec:discussion}
We discuss convenient choice of mask functions, distribution of the offline phase, and extension to multiple nodes.

\paragraph{Convenient choice of mask functions} Even though \sysname applies to any kind of square-integrable functions, a convenient choice of family of mask functions (in the case of two players) is \{$\xi_1^{(j)} \sin (\xi_2^{(j)} x) + \xi_3^{(j)} \cos(\xi_4^{(j)} x)$\} where parameters $\xi_i^{(j)}$ are randomly chosen (with $i = 1,\ 2,\ 3,\ 4$ and \players $j=1,2$). The parameters $\xi_2^{(j)}$ and $\xi_4^{(j)}$ (with $j=1,2$) are public, and it is convenient to set them to  $\xi_2^{(1)}=\xi_4^{(2)}=\pi/(4l)$ and $\xi_4^{(1)}=\xi_2^{(2)}=(3\pi)/(4l)$ (see \Cref{ex3}). The main advantage of this family of mask-functions is that they forgo the need to resort to numerical calculations to compute the contributions of \parseval's identity---calculating the numerical sums of the \parseval's identity is never needed---\players simply evaluate them using the analytic expressions provided in \Cref{sec:convergence}. We can easily observe that it is possible to select mask-functions allowing to perform all calculations analytically even for a large number of \players; mask functions composed of sums of sine and cosine ensures convergence, and can be evaluated using  expressions similar to those given in \Cref{sec:convergence}.

\paragraph{Distribution of the offline phase} Established protocols like SPDZ require the generation of multiplicative triplets during the offline phase; \ie they provide a functionality to generate three elements $(x,y,z)$ such that $xy=z$ in a distributed manner. \sysname may execute twice this functionality to generate $(\tau_1,\sigma_1)$ such that $\tau_1\sigma_1=\eta_1$, and $(\tau_3,\sigma_3)$ such that $\tau_3\sigma_3=\eta_3$; and then simply compute:
\begin{equation}
\eta=\frac{\pi}{2(\eta_1+\eta_3)}=\frac{\pi}{2(\tau_1\sigma_1+\tau_3\sigma_3)}
\end{equation}

\paragraph{Extension to an arbitrary number of nodes} \Cref{sec:security-analysis} shows that colluding nodes cannot retrieve the \players inputs; multiple nodes are only required to handle addition of secrets. However, we can easily extend \sysname to an arbitrary number of nodes $m$ as we can always split the calculations of \parseval's identity into an arbitrary number parts. This can be accomplished in many ways; for instance we may split the vectors {\bf A}, {\bf B}, $\boldsymbol{\alpha}$, and $\boldsymbol{\beta}$ in several contributions, by requiring each of the $m$ nodes to perform only a specific part of the scalar products, under the constraint that the sum of their output matches with the final values of the scalar products of \parseval's identity. Note that for the example depicted in \Cref{ex4} (\Cref{sec:analytic-computation}), we can simply split the scalars $a_0$ and $\alpha_0$ into $m$ shares in such a way that the sum of the contributions coincides with the final scalar products (by applying appropriate normalization).

\section{Extension to Multiple Players} \label{sec:extension}
We introduces the first generalization of \parseval's identity for \fourier series applicable to an arbitrary number of inputs, and uses it to extend the two-\player computation scheme presented in \Cref{sec:construction} to an arbitrary number of \players.

\subsection{Generalization of \parseval's Identity} \label{sec:generalized-parseval}
We present the generalization of \parseval's identity for \fourier series applicable to $n$ inputs. \parseval's identity traditionally applies only to two functions; we overcome this drawback by using the convolution operation between two functions. We illustrate \parseval's identity for three inputs, which can easily be generalized for an arbitrary number of inputs. \Cref{sec:complete-construction} leverages these considerations to build the $n$-\players \sysname protocol.

Firstly we observe that in the case of two \players, \parseval's identity may be cast into the following form
\begin{align}\label{eq:gpar2-3ainputs}
&\frac{1}{2}a_0\alpha_0+\frac{1}{2}\boldsymbol{(A+B)\cdot(\alpha+\beta)}+\frac{1}{2}\boldsymbol{(A-B)\cdot(\alpha-\beta)}=\nonumber \\
&\frac{1}{l}\int_{-l}^l f(x)g(x)dx
\end{align}
Let's now consider three inputs,  $f(x)$, $g(x)$ and $h(x)$ with \fourier series representations given by \Cref{eq:conv1} and by
\begin{equation}\label{eq:gpar1}
h(x)=\frac{\gamma_0}{2}+\sum_{n=1}^\infty \gamma_n\cos\bigl(\frac{n\pi}{l}x\bigr)+\sum_{n=1}^\infty \varrho_n\sin\bigl(\frac{n\pi}{l}x\bigr)\\ 
\end{equation}
respectively; the generalized \parseval's identity reads:
\begin{align}\label{eq:gpar2-3inputs}
&\frac{1}{2}a_0\alpha_0\gamma_0+ \frac{1}{2}\boldsymbol{(A+B)\cdot(\alpha+\beta)\cdot(\gamma+\varrho)} +\\ \nonumber
&\frac{1}{2}\boldsymbol{(A-B)\cdot(\alpha-\beta)\cdot(\gamma-\varrho)}= \\ \nonumber
& \frac{1}{l}\int_{-l}^lf(x)(g\star h)(x)dx+\frac{1}{l}\int_{-l}^lg(x)(f\star h)(x)dx+
\\ \nonumber 
&\frac{1}{l}\int_{-l}^lh(x)(f\star g)(x)dx-\frac{2}{l}\int_{-l}^l\hat{h_c}(x)(\hat{f_c}\star \hat{g_c})(x)dx
\end{align}
or
\begin{align}\label{eq:gpar3}
&\frac{a_0\alpha_0 \gamma_0}{2}+\frac{1}{2}\sum_{n=1}^\infty (a_n+b_n)(\alpha_n+\beta_n)(\gamma_n+\varrho_n)+
\\ \nonumber
&\frac{1}{2}\sum_{n=1}^\infty (a_n-b_n)(\alpha_n-\beta_n)(\gamma_n-\varrho_n)= \\ \nonumber
& \frac{1}{l}\int_{-l}^lf(x)(g\star h)(x)dx+\\ \nonumber
&\bigg(\frac{1}{l}\int_{-l}^lg(x)(f\star h)(x)dx-\frac{1}{l}\int_{-l}^l\hat{g_c}(x)(\hat{f_c}\star \hat{h_c})(x)dx\bigg) +\\ \nonumber
\\ \nonumber 
&\bigg(\frac{1}{l}\int_{-l}^lh(x)(f\star g)(x)dx-\frac{1}{l}\int_{-l}^l\hat{h_c}(x)(\hat{f_c}\star \hat{g_c})(x)dx\bigg)
\end{align}\label{eq:gpar4}
Vectors $\mathbf A$, $\mathbf B$, $\boldsymbol{\alpha}$, $\boldsymbol{\beta}$ and $\boldsymbol{\gamma}$ and $\boldsymbol{\varrho}$ are respectively defined as
\begin{align}\label{eq:gpar5}
{\bf A}&=\{a_n\} \ \  ; \ \  {\bf B}=\{b_n\} \quad (n=1,2,\cdots) \\ \nonumber
{\boldsymbol\alpha}&=\{\alpha_n\}\ \ ;\ \ {\boldsymbol\beta=\{\beta_n\}}\quad(n=1,2,\cdots)\\ \nonumber
{\boldsymbol\gamma}&=\{\gamma_n\}\ \ ;\ \ {\boldsymbol\varrho=\{\varrho_n\}}\quad (n=1,2,\cdots)
\end{align}
and 
\begin{align}\label{eq:gpar6}
\hat{f}_c(x)&\equiv\frac{1}{2}\Bigl(f(x)+f(-x)\Bigr) \ \  ; \  \hat{g}_c(x)\equiv\frac{1}{2}\Bigl(g(x)+g(-x)\Bigr) \nonumber\\
\hat{h}_c(x)&\equiv\frac{1}{2}\Bigl(h(x)+h(-x)\Bigr)
\end{align}
We simply include $\gamma_0$, $\boldsymbol{(\gamma+\varrho)}$, and $\boldsymbol{(\gamma-\varrho)}$ in the left-hand side of the equation, and adapt the right-end side to match calculations. A mathematical formula for an arbitrary number of inputs can easily be obtained following the same logic.

\subsection{Mathematical Construction} \label{sec:complete-construction}
We extend the two-\player \sysname scheme presented in \Cref{sec:construction} to a scheme supporting $n$-\players.

\begin{figure}[t]
\centering
\includegraphics[width=\linewidth,keepaspectratio]{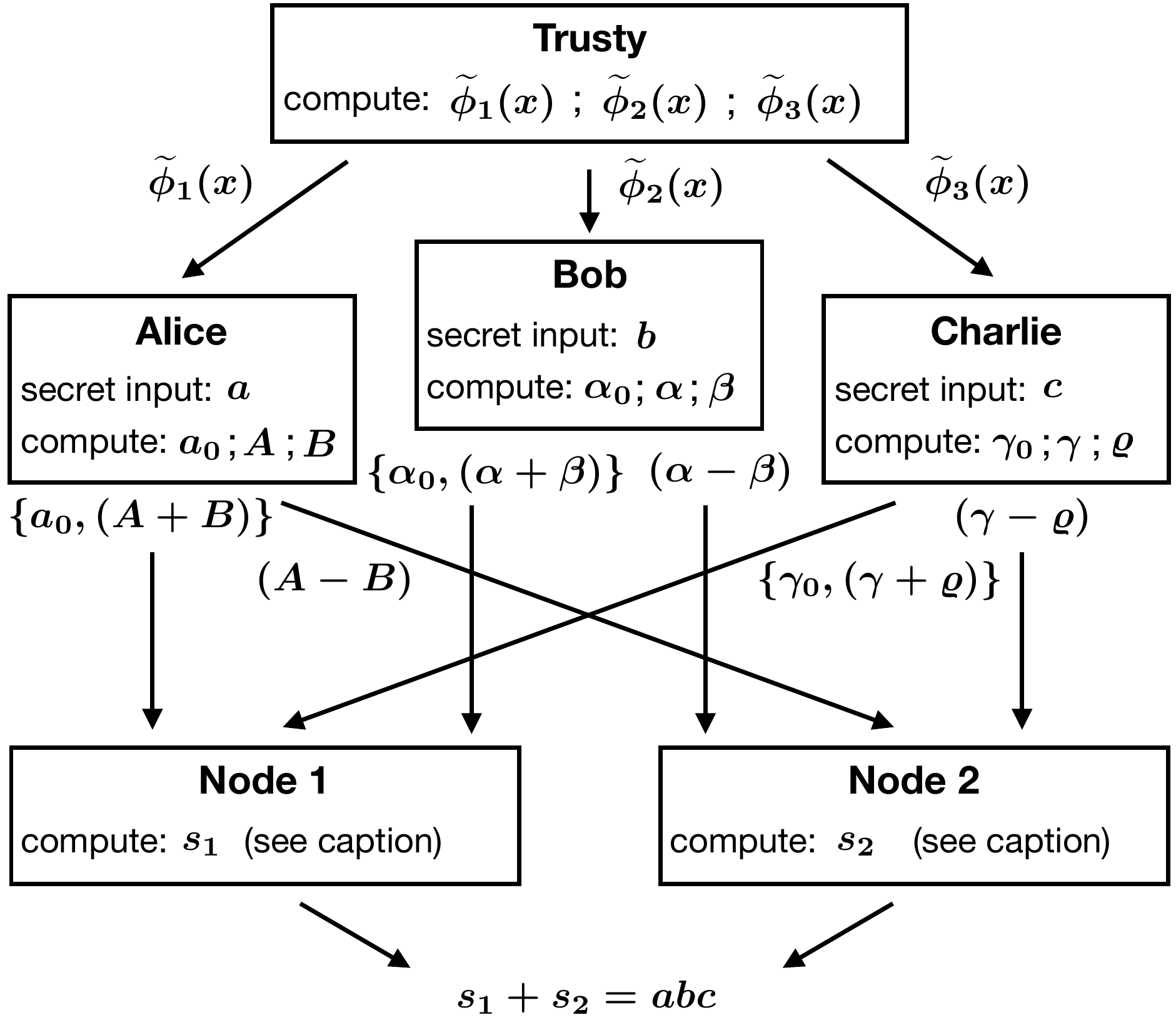}
\caption{\footnotesize Example execution of \sysname for 3 \players and 2 nodes. Each \player receives a normalized mask function from \user{Trusty}. \user{Alice} computes and sends $\{a_0,\bf (A+B)\}$ to \node{1} and $\bf (A-B)$ to \node{2}; \user{Bob} sends $\{\alpha_0,(\boldsymbol{\alpha+\beta})\}$ to \node{1} and $(\boldsymbol{\alpha-\beta})$ to \node{2}; and \user{Charlie} sends $\{\gamma_0,\boldsymbol{\gamma}+\boldsymbol{\varrho}\}$ to \node{1} and $(\boldsymbol{\gamma}-\boldsymbol{\varrho})$ to \node{2}. \Node{1} outputs $s_1 =1/2(a_0\alpha_0\gamma_0)+1/2 \boldsymbol{(A+B)\cdot(\alpha+\beta)\cdot(\gamma+\varrho)}$, and \node{2} outputs $s_2=1/2\boldsymbol{(A-B)\cdot(\alpha-\beta)\cdot(\gamma-\varrho)}$; according to \Cref{eq:gpar2-3inputs}, anyone can compute $s_1+s_2=abc$.}
\label{fig:fmpc}
\end{figure}

\paragraph{Offline phase} \user{Trusty} generates at random the parameters of $n$ mask-functions $(\phi_1, \cdots, \phi_n)$; it then computes the $n$ normalization coefficients similarly to \Cref{ab1}, and uses them to compute the normalized mask-functions $(\widetilde{\phi_1}, \cdots, \widetilde{\phi_n})$ as shown in \Cref{ab2}. This is analogue to the offline phase of the protocol presented in \Cref{sec:simple-construction}, except that we now consider $n$ mask-functions instead of two.

\paragraph{Online phase} \user{Trusty} sends a normalized mask function to each \player; they compute $(f_1, \cdots, f_n)$ using their secret inputs $(a_1,\cdots, a_n)$, and their \fourier coefficients according to \Cref{ab3} and \Cref{eq:conv2}. Similarly to \Cref{sec:simple-construction}, \players send the constant and cosine component of \parseval's identity to \node{1}, and the sine component to \node{2}; therefore the protocol can always be executed with two nodes. Each node then computes and outputs the scalar product of the \players coefficients vector, and the product $\prod_{i=1}^n a_i$ is computed by summing the output of each node according to the generalized \parseval's identity presented in \Cref{sec:generalized-parseval}.

\Cref{fig:fmpc} shows an example of execution of \sysname for three \players. Each \player, \user{Alice}, \user{Bob} and \user{Charlie} receives a normalized mask-function from \user{Trusty}. In this case the normalization coefficient is given by 
\begin{align}\label{eq:abc1}
&\eta^{-1}=\frac{1}{l}\int_{-l}^l\bigg(\phi_1(x)(\phi_2\star\phi_3)(x)+\phi_2(x)(\phi_1\star\phi_3)(x)+\nonumber\\
&\phi_3(x)(\phi_1\star\phi_2)(x)-2\hat{\phi}_{3c}(x)(\hat{\phi}_{1c}\star\hat{\phi}_{2c})(x)\bigg)dx
\end{align}\label{eq:abc2}
with
\begin{equation}\label{eq:abc3}
\hat{\phi}_{ic}\equiv\frac{1}{2}\Bigl(\phi_i(x)+\phi_i(-x)\Bigr)\quad{\rm with}\quad (i=1,2,3)
\end{equation}
and the three normalized mask-functions read
\begin{equation} \nonumber
\widetilde{\phi}_1(x)=\eta^{q_1}\phi_1(x)\ \; \widetilde{\phi}_2(x)=\eta^{q_2}\phi_2(x)\ \; \widetilde{\phi}_3(x)=\eta^{1-q_1-q_2}\phi_3(x)
\end{equation}
where $q_1$ and $q_2$ are two positive real numbers subject to the condition $0<q_1+q_2<1$. \user{Alice} locally computes $\{a_0,\bf A\}$ and $\bf B$; \user{Bob} computes $\{\alpha_0,\boldsymbol{\alpha}\}$ and $\boldsymbol{\beta}$; and \user{Charlie} computes $\{\gamma_0,\boldsymbol\gamma\}$ and $\boldsymbol{\varrho}$ similarly to \Cref{eq:conv2} and \Cref{eq:par1}. \user{Alice} sends $\{a_0,\bf{A+B}\}$ to \node{1} and $\bf(A-B)$ to \node{2}; \user{Bob} sends $\{\alpha_0,(\boldsymbol{\alpha+\beta})\}$ to \node{1} and $(\boldsymbol{\alpha-\beta})$ to \node{2}; and \user{Charlie} sends $\{\gamma_0,(\boldsymbol{\gamma}+\boldsymbol{\varrho})\}$ to \node{1} and $(\boldsymbol{\gamma}-\boldsymbol{\varrho})$ to \node{2}. Finally, \node{1} outputs $s_1=\frac{1}{2}a_0\alpha_0\gamma_0+\frac{1}{2}\boldsymbol{(A+B)\cdot(\alpha+\beta)\cdot(\gamma+\varrho)}$, and \node{2} outputs $s_2=\frac{1}{2}\boldsymbol{(A-B)\cdot(\alpha-\beta)\cdot(\gamma-\varrho})$; following \Cref{eq:gpar2-3inputs}, anyone can compute $s_1+s_2=abc$.

\section{Related Works} \label{sec:related}
There are two main constructions of multiparty protocols: circuit garbling and secret-sharing. Circuit garbling involves encrypting keys in a specific order to simulate a circuit evaluation~\cite{applebaum2014garble}; secret-sharing based protocol as \sysname break the inputs among all nodes who use their shares to evaluate some function through local computations~\cite{bendlin2011semi,damgaard2013constant,nielsen2012new,lindell2015efficient}.

SPDZ~\cite{SPDZ} is one of the most notorious secret-sharing based multiparty computation protocol scaling to an arbitrary number of \players; SPDZ is secure against active adversaries using MACs to verify the integrity of computations, and does not require any kind of trusted third parties; it requires however expensive somewhat homomorphic encryption (SHE) to generate the triples used to compute multiplication of secrets. SPDZ2~\cite{SPDZ2} offers various improvements of the offline phase of SPDZ, and allows the MACs to be checked without revealing its key, thus allowing the MAC to be re-used after it is checked. Mascot~\cite{mascot} uses oblivious transfer rather than SHE to further improve performances of the offline phase and generate triples.

The literature following SPDZ mainly improves the offline phase, while \sysname innovates on the online phase.
Most multiparty protocols for arithmetic circuits based on secret-sharing that scale to an arbitrary number of \players are based on the algebra introduced by Donald Beaver~\cite{beaver1991efficient}. They thus require triples to compute multiplication of secrets and impose communication between nodes during the online phase; their online latency therefore increases with the number of multiplications to evaluate. \sysname comes with a different trade-off: \sysname nodes do not communicate during the online phase and thus enjoy constant (and low) online latency in the size of the circuit, at the cost of not supporting composition of operations (see \Cref{sec:limitations}) which makes \sysname only suitable to evaluate low-depth circuits. Established secret-sharing protocols face a trade-off between security and online latency---adding nodes improves security but increases latency. \sysname forgoes this trade-off since multiplications can always be performed by two nodes (see \Cref{sec:extension}); however its security rely on the choice of the mask functions.

\section{Limitations and Future Work} \label{sec:limitations}
\sysname has several limitations that are beyond  the  scope  of this work, and deferred to future work. 
\sysname \first does not support composition of operations. That is, while most established scheme~\cite{SPDZ,SPDZ2,mascot} can evaluate expressions like $(a+b)(c+d)$ with two additions and one multiplication, \sysname needs to distribute the operation and evaluate $(ac+ad+bc+bd)$. This limitation is problematic for large computations and makes \sysname suitable only to evaluate circuits with a relatively small number of multiplications. Other limitations are \second that the security and efficiency of the scheme rely on the choice of the mask functions. We also defer as future work \fourth adapting our scheme to withstand active adversaries, potentially adapting the MAC-based approach introduced by SPDZ~\cite{SPDZ}.

\section{Conclusions} \label{sec:conclusion}
\sysname is a novel secret-sharing multiparty computation protocol of arithmetic circuits that requires no online communication between nodes to compute multiplication of secrets; \sysname innovates on the online phase by applying \fourier series to \parseval's identity. \sysname enjoys of constant latency in the size of the circuit, but is only suitable to evaluate low-depth circuits. We introduce the first generalization of \parseval's identity for \fourier series applicable to an arbitrary number of inputs, and use it to allow \sysname to operate on an arbitrary number of inputs. \sysname paves the way for new kind of multiparty computation protocols, hopefully encouraging discussions and spurring new directions to explore.

\section*{Acknowledgements} This work is supported by the EU H2020 DECODE project under grant agreement number 732546 as well as \texttt{chainspace.io}. We thank George Danezis for helpful suggestions on early manuscript and valuable advice, and Yiannis Psaras for comments and proofreading.

\bibliographystyle{IEEEtranS}
\bibliography{references}

\appendix
\section{Finite Field Calculations} \label{sec:fields}
We show how the arithmetic described in \Cref{sec:construction} can be computed numerically using finite fields computations.
\Cref{eq:conv1} reads in complex variables as below:
\begin{align}\label{eq:FFC1}
&\phi(x)=\sum_{-\infty}^{\infty}A_n\exp{\bigl(2in\pi x/L\bigr)}\quad {\rm where}\\
&A_n=\frac{1}{L}\int_{-L/2}^{+L/2}\phi(x)\exp{\bigl(-2in\pi x/L\bigr)}dx\nonumber
\end{align}\label{eq:FFC1a}
Coefficients $A_n$ are linked to coefficients $a_n$ and $b_n$ by the following relations
\begin{equation}
A_n=
\left\{ \begin{array}{ll}
1/2(a_n+ib_n) \ & \ \mbox{{\rm if}\ \ $n < 0$}\\
a_0/2 \ & \ \mbox{{\rm if}\ \ $n=0$}\\
1/2(a_n-ib_n)\  & \ \mbox{{\rm if}\ \ $n > 0$}\\
\end{array}
\right.
\ \ {;\ \ \rm and}\ \ L=2l\nonumber
\end{equation}
For two \players, Parseval's identity reads
\begin{equation}\label{eq:FFC2}
\sum_{-\infty}^{\infty} A_n{\bar B}_n=\frac{1}{L}\int_{-L/2}^{+L/2}\phi(x){\bar\psi}(x)dx
\end{equation}
where $B_n$ is the complex Fourier transform of function $\psi(x)$ and the bar over the variables indicates the complex conjugate operation.
The discrete version of Equations \ref{eq:FFC1} and \ref{eq:FFC2} can be obtained by introducing a Galois Field. Let's assume a finite field ${\mathbb F}_{p^m}$ where $p$ is a prime and $m$ a positive integer, and let $N$ be a divisor of $p^m-1$ (possibly $N=p^m-1$); any periodic function $f(n)$ with period $N$ (\ie $f(n+N)=f(n)$ for all integers $n$) can be expressed as a finite Fourier series of the form
\begin{equation}\label{eq:FFC3}
f(n)=1/N \sum_{\kappa=0}^{N-1}g(\kappa) \exp{\bigl(2\pi i\kappa n/N\bigr)}
\end{equation}
with $\{g(k)\} \in \mathbb{F}_{p^m}$ with $k = 0, 1,\dots, N-1$. Note that $\exp{\bigl(2\pi i\kappa/N\bigr)}$ is a $N-th$ root of unity and the terms $\exp{\bigl(2\pi i\kappa n/N\bigr)}$ appearing in the expression for the finite \fourier series above are the $n-th$ powers of this. Note that \Cref{eq:FFC3} can also be evaluated outside the domain $\kappa\in [0, N-1]$ since the extended sequence is $N-$periodic. 
The fundamental theorem for discrete \fourier transforms also provides a simple formula for the coefficients $\{g(k)\}$~\cite{pollard,cooley}:
\begin{equation}\label{eq:FFC4}
g(\kappa)=\sum_{n=0}^{N-1}f(n)\exp{\bigl(-2\pi i\kappa n/N\bigr)}
\end{equation}
The validity of the periodicity property satisfied by $g(\kappa)$ (\ie $g(\kappa+N)=g(\kappa)$) and the following orthogonality relation
\begin{equation}\label{eq:FFC5}
1/N \sum_{n=0}^{N-1}\exp{\bigl(2\pi i(\kappa-\kappa')n/N\bigr)} =\delta_{\kappa\kappa'}
\end{equation}
are easily checked. \parseval's identity states
\begin{equation}\label{eq:FFC6}
\sum_{\kappa=0}^{N-1}A_\kappa {\bar B}_\kappa = N \sum_{n=0}^{N-1}\phi_n {\bar\psi}_n
\end{equation}
with $\{A_\kappa\}$ and $\{B_\kappa\}$ denoting the discrete \fourier transform coefficients of the sequences $\{\phi_n\}$ and $\{\psi_n\}$, respectively.
The convolution theorem for discrete \fourier transforms indicates that the convolution of two infinite sequences can be obtained as the inverse transform of the product of the individual transforms. A simplification occurs when the sequences are of finite length $N$. Let us suppose a sequence extended by periodic summation as
\begin{equation}\label{eq:FFC7}
(\phi_N)_n\equiv \sum_{\iota=-\infty}^{\infty}\phi_{(n-\iota N)}=\phi_{n (mod\ N)}	
\end{equation}
The convolution operation between two sequences $\{\phi_n\}$ and $\{\psi_n\}$, extended by periodic summation, is defined as 
\begin{equation}\label{eq:FFC8}
(\phi \star \psi_N)_n\equiv \sum_{\iota=0}^{N-1}\phi_\iota (\psi_N)_{n-\iota}
\end{equation}
We get:
\begin{equation}\label{eq:FFC9}
{\mathcal F}[(\phi \star \psi_N)_n]=(A\cdot B)_n
\end{equation}
with “$\mathcal F$” indicating the DFT operation and dot stands for {\it element-wise multiplication} between the two sequences $\{A_\kappa\}$ and $\{B_\kappa\}$. The convolution theorem duality reads
\begin{equation}\label{eq:FFC10}
{\mathcal F}[(\phi \cdot \psi)_\kappa]=\frac{1}{N}(A\star B_N)_\kappa
\end{equation}

Despite Discrete \fourier Transforms (DFT) are efficient to compute (\eg through Fast Fourier Transforms), it is important to note that a convenient choice of the mask-functions always allows to obtain the expressions of the \fourier transforms and \parseval's identity analytically (\ie without performing any numerical calculations); computing DFT is only needed if we operate using generic mask-functions. For instance, \Cref{sec:example} illustrates how to forgo numerical calculations using the evaluation of the convergent sums given in \Cref{sec:convergence}. \sysname ultimately only requires addition, multiplication and modular inversion over finite fields.

\end{document}